\newcommand{\p}{\partial}
\newcommand{\dd}{{\rm d}}
\begin{document}

\title{The vacuum conservation theorem\thanks{Work partially supported by GNFM of INDAM.}
}


\author{E. Minguzzi
}


\institute{E. Minguzzi \at
              Dipartimento di Matematica e Informatica ``U. Dini'', Universit\`a
degli Studi di Firenze, Via S. Marta 3,  I-50139 Firenze, Italy. \\
              \email{ettore.minguzzi@unifi.it}           
}
\date{}

\maketitle

\begin{abstract}
A version of the vacuum conservation theorem is proved which does not assume the existence of a time function nor demands stronger properties than the dominant energy condition. However,  it is shown that a stronger stable version plays a role in the study of compact Cauchy horizons.
\end{abstract}

\section{Introduction}
Hawking's vacuum conservation theorem  \cite{hawking70b} states that in a spacetime which admits a time function and which satisfies the dominant energy condition, the stress-energy tensor vanishes on a compact domain delimited by two locally achronal hypersurfaces $S_1$ and $S_2$  provided it vanishes on $S_1$ (or $S_2$). Roughly speaking, if a space region is empty at one time it will remain so at later times provided no matter-energy flows in from outside.
A more detailed proof can be found in \cite[Sect.\ 4.3]{hawking73} and a simplified argument was obtained by Carter in \cite{carter02}.

In this work we wish to remove the assumption on the existence of a time function.
This is desirable since the domain under consideration is often of the form $\overline{D^+(S_1)}$, with $S_1$ partial Cauchy hypersurface, where  the horizon $S_2:=H^+(S_1)$ might form precisely due to the presence of closed timelike curves behind it, where a time function cannot exist.
\begin{remark}
Hawking was aware of this limitation in his theorem and in fact on \cite[p.\ 298]{hawking73} he sketched a possible more involved proof which however demanded a slightly stronger energy condition \cite[point (5), p.\ 293]{hawking73}.
Furthermore, in the new argument Hawking still assumes that a time function of the globally hyperbolic set $\textrm{Int}D(S_1)$ extends up to the boundary $H^+(S_1)$ remaining $C^1$, which should not be expected in general due to the lack of regularity of $H^+(S_1)$.
\end{remark}

In this work we shall retain Carter's simplification and we shall reach the desired conclusion without invoking the existence of a time function. Also,  we shall not need to strengthen the  energy condition.

\section{The dominant energy condition}

 We recall that a {\em spacetime} $(M,g)$ is a paracompact, connected, time oriented Lorentzian manifold of dimension $N+1\ge 2$. The metric has signature $(-,+,\cdots,+)$. In our terminology the lightlike, timelike or causal vectors are non-zero. A null (resp.\ non-spacelike) vector is lightlike (resp.\ causal) or zero.
We assume that the metric satisfies the Einstein equations with cosmological constant, and that the stress-energy tensor satisfies:

\begin{definition}
The {\em dominant energy condition}. On every tangent space $T_pM$ the linear endomorphism
\begin{equation} \label{llp}
v^\mu \to -T^\mu_{\ \nu}\, v^\nu
\end{equation}
sends the future-directed non-spacelike cone into itself.
\end{definition}
When $v^\mu$ is timelike and normalized $-T^\mu_{\ \nu}\, v^\nu$ represents the energy-mumentum flow which the condition demands to be causal, roughly speaking energy cannot travel faster than light. Physically, the dominant energy condition is nothing more than that since the constraint for $v$ null follows by continuity from that for $v$ timelike.

Let us recall some basic consequences of the dominant energy condition.
As observed by Carter a non-trivial endomorphism  (\ref{llp}) cannot send a future-directed timelike vector $v$ into the zero vector. Indeed, let us consider a small perturbation
\[
v^\mu+\delta v^\mu \to -T^\mu_{\ \nu} (v^\nu+\delta v^\nu)= -T^\mu_{\ \nu} \,\delta v^\nu.
\]
Observe that since the timelike cone is open $v^\mu+\delta v^\mu$ is timelike for sufficiently small $\delta v^\mu$.
Clearly, if the rank of $T$ is non-zero then we can find a  $\delta v^\nu$ with non-vanishing image, and altering its sign if necessary, we obtain an image vector which is not a future-directed non-spacelike vector. The contradiction proves that only null vectors can be sent to the zero vector.

Since the scalar product of two future-directed causal vectors is non-positive and vanishes if and only if they are proportional and lightlike we see that, for any two future-directed causal vectors $v$ and $w$
\[
T(v,w)\ge 0,
\]
equality holding only if at least  one of the vectors is lightlike. In particular, the dominant energy condition (DEC) implies the {\em weak energy condition} (WEC) which demands that for every timelike vector $v$ (and hence by continuity, for every null $v$) $T(v,v) \ge 0$, and the WEC implies the {\em null energy condition} (NEC) also called {\em null convergence condition} according to which $T(v,v)\ge 0$ for every null vector $v$.

We need to find some inequalities implied by the dominant energy condition \cite{hawking73}.
Observe that if $V^\mu$ is future-directed timelike and normalized and $A^\mu$ is orthogonal to it then, with $\vert A\vert= (A_\alpha A^\alpha)^{1/2}$, the vector $V^\mu\pm A^\mu/ \vert A\vert$ is lightlike thus
\[
T^{\mu \nu} V_\mu (V_\nu\pm A_\nu/ \vert A\vert)\ge 0,
\]
which can be rewritten
\begin{equation} \label{one}
\vert T(V, A) \vert \le \vert A\vert\, T(V,V).
\end{equation}
Similarly, if $B^\mu$ is another vector orthogonal to $V^\mu$,$V^\mu+B^\mu/ \vert B\vert$ is lightlike thus
\[
T^{\mu \nu} (V_\mu+B_\mu/ \vert B\vert) (V_\nu+A_\nu/ \vert A\vert)\ge 0,
\]
\begin{equation} \label{aao}
T(V,V)+T(V,B)/\vert B\vert+ T(V,A)/\vert A\vert+T(A,B) / (\vert A\vert \, \vert B\vert)\ge 0 .
\end{equation}
We are going to obtain an inequality involving just $T(A,A)$, thus we can redefine $A\to -A$ if necessary to make $T(V,A)\le 0$.
Replacing $B=A$ and $B=-A$ in Eq.\
(\ref{aao}) gives
\[
\vert T(A,A)\vert \le T(V,V) \, \vert A\vert^2 .
\]
From this inequality we obtain the two inequalities
\begin{align*}
 T(\frac{A}{\sqrt{2}\vert A\vert}+\frac{B}{\sqrt{2}\vert B\vert},\frac{A}{\sqrt{2}\vert A\vert}+\frac{B}{\sqrt{2}\vert B\vert}) & \le T(V,V) \, (1+A^\mu B_\mu/ (\vert A\vert \, \vert B\vert)) ,\\
 -T(\frac{A}{\sqrt{2}\vert A\vert}-\frac{B}{\sqrt{2}\vert B\vert},\frac{A}{\sqrt{2}\vert A\vert}-\frac{B}{\sqrt{2}\vert B\vert}) & \le T(V,V) \, (1-A^\mu B_\mu/ (\vert A\vert \, \vert B\vert)) .
\end{align*}
Summing (and considering the equation so obtained also for $B\to -B$)
\begin{equation} \label{two}
\vert T(A,B)\vert \le T(V,V) \, \vert A\vert \, \vert B\vert .
\end{equation}
Any tensor $D_{\mu\nu}$ whose contractions with $V^\alpha$ vanishes can be written as the sum of $N^2$ terms of the form $A_\mu B_\nu$ thus
\begin{equation} \label{three}
\vert T^{\mu \nu} D_{\mu \nu} \vert \le  N^2 \sqrt{{D_{\alpha \beta} D^{\alpha \beta}}} \, T(V,V).
\end{equation}
Equations (\ref{one}), (\ref{two}), prove that in any orthonormal frame in which $e_0$ is timelike
\[
\vert T^{ab}\vert \le T^{00}, \qquad a,b=0,1,\cdots, N.
\]
%
%
%
%

\section{Vacuum conservation theorem}

We are ready to prove a first vacuum conservation result.

\begin{theorem} \label{kaq}
Let $(M,g)$ be a spacetime which satisfies the dominant energy condition. Let us consider an open relatively compact region $U$ of spacetime delimited by two  locally achronal hypersurfaces $S_1$ and $S_2$, possibly with edge, such that $\bar{U}$ is the union of integral connected segments (possibly degenerate to a point) of a $C^1$ future-directed timelike vector field $V$ which have past endpoint in  $S_1$ and future endpoint in $S_2$.
%
%
Then the stress energy tensor vanishes on $S_1$ iff it vanishes on $S_2$ iff it vanishes on $\bar U$.
\end{theorem}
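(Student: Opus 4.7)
The plan is to adapt Carter's divergence argument to this setting where no global time function is available, letting the timelike vector field $V$ itself play the role of a ``time direction'' on the compact region $\bar U$. I would introduce the current $P^\mu := -T^{\mu\nu}V_\nu$. By the dominant energy condition $P$ is future-directed causal (or zero), and by the Carter-type observation recalled in the previous section, $P(p)=0$ already forces $T(p)=0$; moreover $T(V,V)=-V_\mu P^\mu\ge 0$, with equality only if $P=0$. It therefore suffices to show that $T(V,V)$ vanishes identically on $\bar U$ whenever it vanishes on $S_1$, the case ``vanishing on $S_2$'' being identical after replacing $V$ by $-V$ and interchanging $S_1\leftrightarrow S_2$.

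Next, using $\nabla_\mu T^{\mu\nu}=0$, I would compute
\[\nabla_\mu P^\mu = -T^{\mu\nu}\nabla_{(\mu}V_{\nu)},\]
decompose $\nabla_{(\mu}V_{\nu)}$ into pieces parallel and orthogonal to $V$, and apply the inequalities (\ref{one})--(\ref{three}) termwise. Since $\bar U$ is compact and $V$ is $C^1$, every coefficient obtained is uniformly bounded, yielding a pointwise estimate
\[|\nabla_\mu P^\mu|\ \le\ K\, T(V,V)\ =\ -K\, V_\mu P^\mu \qquad \text{on } \bar U,\]
with $K$ depending only on $V$ and the geometry of $\bar U$.

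The third step is a Gronwall argument driven by the flow of $V$. Because every integral segment of $V$ in $\bar U$ begins on $S_1$ and ends on $S_2$, the $V$-flow defines a continuous ``time from $S_1$'' function $f:\bar U\to[0,\infty)$ with $Vf=1$, and the sets $R(t):=\{f\le t\}$ form an increasing exhaustion of $\bar U$. Applying the Lorentzian divergence theorem to $P$ over $R(t)$: the contribution on $S_1$ vanishes by hypothesis; the contribution on $S_2\cap R(t)$ is nonnegative since $S_2$ is locally achronal (future-causal outward normal) and $P$ is future causal; the flux through the lateral slice $\Sigma_t:=\{f=t\}$ is bounded pointwise by $T(V,V)$ times factors controlled on $\bar U$, since $P=-T\cdot V$ and the inequalities of the previous section bound $T(V,n)$ for any $n$ by $T(V,V)$ times $|n|$. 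Combined with the divergence estimate this yields a Gronwall-type inequality for $E(t):=\int_{R(t)} T(V,V)\, dv$ forcing $E\equiv 0$, and hence $T(V,V)\equiv 0$ and $T\equiv 0$ on $\bar U$.

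The main technical obstacle I anticipate is making the divergence theorem on $R(t)$ rigorous when $S_1$, $S_2$, and $\Sigma_t$ fail to be $C^1$ or spacelike; in particular $\Sigma_t$ generally has no timelike unit normal. The natural way around this is to exploit the Lipschitz character of locally achronal hypersurfaces (together with Rademacher's theorem), or, more concretely, to work in flow-box coordinates supplied by $V$ in which $\bar U\cong\{(q,s): q\in S_1,\ 0\le s\le L(q)\}$, so that the divergence theorem reduces to Fubini plus the fundamental theorem of calculus along fibres of $V$, and the causal inequalities play out directly in terms of functions of $s$.
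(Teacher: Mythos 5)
Your setup coincides with the paper's: the same current $P^\mu=-T^{\mu}_{\ \nu}V^\nu$, the same observation (Carter's) that $P=0$ already forces $T=0$ when $V$ is timelike, the same flow-time function $f$ built by flowing from $S_1$ along $V$, and the same use of the inequalities (\ref{one})--(\ref{three}) on the compact set $\bar U$ to get $\vert\textrm{div}\,P\vert\le K\,T(V,V)$. Where you diverge is the closing step: you slice $\bar U$ by the level sets $\Sigma_t=\{f=t\}$ and run a Gronwall iteration on $E(t)=\int_{R(t)}T(V,V)$, whereas the paper multiplies the current by $e^{-Ct}$ and applies Stokes \emph{once} over all of $U$, so that the only hypersurface integrals ever needed are over $S_1$ and $S_2$.

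The gap is in your treatment of the lateral flux. The sets $\Sigma_t$ are the images $\varphi_t(S_1)$ under the flow of $V$, and the flow of a timelike field does not preserve achronality: away from $S_1$ the slices $\Sigma_t$ can acquire timelike tangent directions, equivalently $\nabla f$ can fail to be timelike. Writing $\nabla f=-V+W$ with $W\perp V$ (so that $g(\nabla f,\nabla f)=-1+\vert W\vert^2$), the outward flux density of $P$ through $\Sigma_t$ is $\dd f[P]=T(V,V)-T(W,V)$, and (\ref{one}) only yields $T(V,V)-T(W,V)\ge (1-\vert W\vert)\,T(V,V)$. Your Gronwall inequality needs precisely the lower bound $\int_{\Sigma_t} i_P\mu\ \ge\ c\,E'(t)$ with $c>0$; the upper bound you invoke (``bounded pointwise by $T(V,V)$ times controlled factors'') is not enough, and the lower bound fails wherever $\vert W\vert\ge 1$, i.e.\ wherever $f$ fails to be a time function --- which is exactly the hypothesis the theorem is designed to drop (think of $\bar U=\overline{D^+(S)}$ with a horizon carrying closed null generators, on which no function can have everywhere timelike gradient). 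The paper's exponential weight is of course the integrating factor of your Gronwall inequality, but its real advantage is structural: the only boundary integrals it produces are over the \emph{locally achronal} hypersurfaces $S_1$ and $S_2$, where the timelike-or-zero field $X$ is transverse and the flux $I_k$ has a definite sign, and the non-achronal slices $\Sigma_t$ never appear as integration domains. To salvage your scheme you would have to either prove that the flow images of $S_1$ remain non-timelike throughout $\bar U$ (false in general), or reorganize the estimate so that only $S_1$, $S_2$ and the bulk divergence enter --- which is what the $e^{-Ct}$ device accomplishes.
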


\begin{proof}


Let us first give the proof for $S_1$ and $S_2$ continuously differentiable.
Let us normalize $V$, $V^\alpha V_\alpha=-1$.
Let  $(t,p) \to \varphi_t(p)$  be the map which assigns to $(t,p)$ the endpoint of the curve $x:[0,t] \to M$, $x(0)=p$, $\dot{x}=V$, wherever it exists. By well known results on the dependence on initial conditions and external parameters of ordinary differential equations \cite{hartman64} this flow map $\varphi$ is $C^1$ on an open subset of $\mathbb{R}\times M$ where the map $(t,p) \to \varphi_t(p)$  makes sense. For every $t$ the map $\varphi_t$ is actually a local diffeomorphism wherever it is defined \cite[Theor.\ 2.9]{lang95}.

By assumption every point of $\bar{U}$ can be reached from $S_1$ following an integral line of $V$.
Let $t\colon \bar U \to [0,+\infty)$ be the function defined through: $t(q)$ is such that  $\varphi_t(x) =q$ for some $x\in S_1$. By the implicit function theorem $t$ has the same regularity of $S_1$ thus $C^1$, and by construction $\dd t[V]=1$. We stress that we are not claiming that $t$ is a time function, just that it is well defined over $\bar{U}$.

For  a real number $C$ to be specified later on, let us introduce the  vector field
\[
X^\mu= - e^{-C t} T^\mu_{\ \nu} V^\nu ,
\]
which is future-directed timelike wherever $T\ne 0$ or zero. Let us denote with $\mu$  the usual volume form.
Let us consider the non-negative integrals
\begin{align*}
I_{k}=\int_{S_{k}} i_X  \mu , \qquad k=1,2.
\end{align*}
If $X\ne 0$ at a point $r\in S_k$, $k=1,2$, then $X$ is timelike and hence transverse to $S_k$, thus the integral is strictly positive. As a consequence, for any $k=1,2$, $I_k=0$ iff $X=0$ on $S_k$ iff $T=0$ on $S_k$.

Recalling the identity $d\, i_X \mu = \textrm{div} X \mu$ we obtain from Stokes theorem
\[
I_2-I_1=\int_U \textrm{div} X \mu.
\]
Using $T^{\nu \mu}_{\ ;\mu}=0$
\begin{equation} \label{div}
\textrm{div} X=e^{-Ct} T^{\mu \nu} [C t_{;\mu} V_\nu-V_{\mu; \nu}] .
\end{equation}
Since
\[
[C t_{;\mu} V_\nu-V_{\mu; \nu}] V^\mu V^\nu=-C ,
\]
the matrix in square brackets can be written
\begin{equation} \label{dec}
C t_{;\mu} V_\nu-V_{\mu; \nu}=-C \, V_\mu V_\nu+V_\mu A_\nu+ B_\mu V_{\nu} +D_{\mu \nu} ,
\end{equation}
for some continuous tensor fields $A, B, D$  such that their contractions  with $V$ vanish. This fact can be most easily understood passing to an orthonormal basis with $e_0= V$. In the same way one easily sees that $A^\alpha A_\alpha, B^\beta B_\beta\ge 0$ and $D^{\mu \nu}D_{\mu \nu}\ge 0$.

On every relatively compact open set $O$ which admits a coordinate system the contraction $T^{\mu \nu} [C t_{;\mu} V_\nu-V_{\mu; \nu}]$ can be made non-positive by choosing a sufficiently large value of $C$, and non-negative by choosing a sufficiently large value of $-C$.  Indeed, by the dominant energy condition and by using the continuity of $A, B,  V,$ on the compact set $\bar{O}$ we have that, due to Eqs.\ (\ref{one}) and (\ref{three}), there is a constant $K>0$ such that ($N+1$ is the spacetime dimension)
\begin{align*}
\vert T^{\mu \nu} V_\mu A_\nu \vert &\le \vert A\vert \, T(V, V) \le K\, T(V,V),\\
\vert T^{\mu \nu} D_{\mu \nu}\vert &\le N^2 \sqrt{{D_{\alpha \beta} D^{\alpha \beta}}} \,  T(V,V)\le K \, T(V,V),
\end{align*}
where an equation analogous to the first one holds also for $B$.
The statement at the beginning of the paragraph  follows plugging (\ref{dec}) into Eq.\ (\ref{div}).

Since $\bar{U}$ can be covered by a finite number of relatively compact coordinated neighborhoods we have on $\bar{U}$, $(\textrm{sgn} C )\textrm{div} X \le 0$. Observe that actually by choosing $\vert C\vert$ sufficiently large the equality case is excluded unless $T=0$ at the point under consideration.
Thus for $C>0$ large enough
\[
0\le I_2 \le I_1
\]
where the second equality holds only if $T=0$ on $U$ and hence $\bar{U}$.
Thus if we take $C$ large enough we have that $T=0$ on $S_1$ implies  $T=0$ on $\bar U$ and hence in $S_2$.

Analogously, for $-C>0$ large enough
\[
0\le I_1\le I_2
\]
where the second equality holds only if $T=0$ on $U$ and hence $\bar{U}$.
Thus if we take $-C$ large enough we have that $T=0$ on $S_2$ implies  $T=0$ on $\bar U$ and hence in $S_1$.

The proof for the Lipschitz regularity of $S_1$ and $S_2$ is as above, there are just two critical steps. The first is the construction of function $t$. We cannot define it starting the flow from $S_1$ since, as this hypersurface is only Lipschitz, $t$ would not be $C^1$. However, in a neighborhood of $p \in S_1$ one can introduce coordinates $\{ x^\mu\}$ such that $\p_0=V$, and $\{x^i\}$ are induced from the $V$-projection on a smooth spacelike hypersurface $\Sigma$  transverse to $V$. Then $S_1$ can be locally expressed as a Lipschitz graph $x^0=h({\bf x})$ which can be approximated by a smooth graph by Whitney approximation theorem \cite[Theor.\ 6.21]{lee03}. Through a partition of unity it is possible to patch together these local graphs into a smooth hypersurface $S_1'$ with the property that every $V$-integral curve which intersect $S_1$ intersects $S_1'$ and conversely. Then the definition of $t$ is as above but starting the flow from $S_1'$.

The other delicate step concerns the application of the  Gauss-Green (divergence) theorem which, fortunately, has indeed been generalized to Lipschitz domains and even Lipschitz fields \cite[Sect.\ 5.8]{evans92} \cite{pfeffer12}. In this theorem the boundary term $\int_B i_X\mu$, $B\subset S_k$, can be calculated expressing $S_k$ as a local graph and extending the expression of the integral in terms of a $C^1$ function $h$ to the Lipschitz case. More in detail, if $X=a(x^0,{\bf x}) \p_0+ b^i(x^0,{\bf x}) \p_i$
\begin{align*}
i_X\mu \vert_{S_k}&:=\sqrt{-\vert g\vert (h({\bf x}), {\bf x})} \, i_X \dd x^0\wedge \cdots\wedge \dd x^{N}\vert_{S_k}\\
&=\sqrt{-\vert g\vert (h({\bf x}), {\bf x})}\, (a-\p_b h) \, \dd x^1\wedge \cdots\wedge \dd x^{N} .
\end{align*}
Observe that $\p_b h$ is $L^1_{loc}$ by the Lipschitzness of $h$, thus the integral $\int_B i_X\mu$ is well defined. $\square$
\end{proof}

\begin{remark}
The physical meaning of the integral $I_k$ can be understood, at least in the continuously differentiable case, as follows.
Let us denote with $\hat S_k$, $k=1,2$, the open subset of $S_k$ in which $X$ is timelike.
Let us endow $\hat S_k$ with the following volume form
\[
\nu=\frac{1}{-g(X,n)}\, i_X  \mu
\]
where $n$ is a $C^0$ future-directed causal  field normal to $S_k$ (possibly lightlike). Here $\nu$ is evaluated just on the tangent space to $\hat S_{i}$. This choice of volume is independent of the transverse field $X$ but it depends on the scale of  $n$ (this multiplying factor could be naturally fixed through normalization if $n$ were  timelike). We have
\begin{align}
I_{k}=\int_{S_{k}} i_X  \mu
=\int_{\hat S_{k}} i_X  \mu= \int_{\hat S_{k}} [-g(X,n)] \nu= \int_{\hat S_{k}} e^{-Ct} T(V,n) \nu . \label{kos}
\end{align}
One should be very careful in using the last expression when $S_k$ is partly lightlike for one can easily miss the difference between $\hat S_k$ and $S_k$.
\end{remark}

\begin{corollary} \label{odp}
Let $S$ be a  locally achronal compact topological hypersurface possibly with edge and let $\overline{D^+(S)}$ be compact. Under the dominant energy condition the stress energy tensor vanishes on $S$ iff it vanishes on $H^+(S)$ iff it vanishes on $\overline{D^+(S)}$.
\end{corollary}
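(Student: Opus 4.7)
The plan is to deduce Corollary~\ref{odp} as a direct application of Theorem~\ref{kaq}, taking $S_1:=S$, $S_2:=H^+(S)$, $U$ to be the topological interior of $\overline{D^+(S)}$, and $V$ any $C^1$ future-directed timelike vector field on $M$ (available by time-orientability). Then $\bar U=\overline{D^+(S)}$ is compact by hypothesis, so $U$ is open and relatively compact; $S_1$ is locally achronal by assumption, while $S_2=H^+(S)$ is known to be a closed, locally Lipschitz, achronal topological hypersurface, compact as a closed subset of $\overline{D^+(S)}$. Thus all the ambient hypotheses of Theorem~\ref{kaq} other than the foliation one are immediate.

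The heart of the argument is the verification of the foliation hypothesis: every point of $\overline{D^+(S)}$ must lie on a connected $V$-integral segment (possibly degenerate to a point) with past endpoint in $S$ and future endpoint in $H^+(S)$. For $q\in D^+(S)$, the past-inextendible continuation of the backward $V$-integral curve through $q$ is a past-inextendible causal curve in $M$, hence by the very definition of $D^+(S)$ it meets $S$ at some past endpoint $p\in S$; a short argument using the characterization of $D^+(S)$ together with the achronality of $S$ shows that the whole $V$-segment from $p$ to $q$ remains in $D^+(S)$. Conversely, flowing $V$ forward from $p\in S$, the resulting future-directed timelike curve cannot stay in the compact set $\overline{D^+(S)}$ for all future parameter values, because $\textrm{Int}\,D^+(S)$ is globally hyperbolic and hence strongly causal, which precludes the accumulation that would otherwise be forced by compactness; its first exit point from $\overline{D^+(S)}$ must then lie on $H^+(S)$, since $S$ is achronal and cannot be re-crossed by a future-directed timelike curve. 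Edge points of $S$ and points of $S\cap H^+(S)$ correspond to the allowed degenerate (one-point) segments.

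The main obstacle I foresee is the limit step for $q\in H^+(S)\setminus S$: approximate $q$ by a sequence $q_n\in D^+(S)$, to each $q_n$ associate its past endpoint $p_n\in S$ and flow time $t_n\ge 0$, extract convergent subsequences using compactness of $S$ and of $\overline{D^+(S)}$, and pass to the limit via continuous dependence of the flow $\varphi_t$ on initial data. Once the foliation is established, Theorem~\ref{kaq} delivers at once the three-way equivalence of the vanishing of $T$ on $S$, on $H^+(S)$ and on $\overline{D^+(S)}$, which is precisely the content of Corollary~\ref{odp}.
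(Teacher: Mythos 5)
Your route is exactly the paper's route: the published proof of Corollary \ref{odp} consists of the single observation that one may take $S_1=S$, $S_2=H^+(S)$ and an arbitrary smooth future-directed timelike field $V$, after which ``the assumptions of Theorem \ref{kaq} are satisfied.'' What you add is the verification of the foliation hypothesis that the paper leaves to the reader, and most of it is sound: the backward-flow argument for $q\in D^+(S)$, the use of achronality to keep the segment inside $D^+(S)$, and the limit step for $q\in H^+(S)\setminus S$. The one step that does not work as written is the exit argument. You rule out imprisonment of the forward flow line in the compact set $\overline{D^+(S)}$ by invoking strong causality of $\mathrm{int}\,D^+(S)$, but the accumulation ($\omega$-limit) set of that flow line is a priori only contained in $\overline{D^+(S)}$, which includes $H^+(S)$, where strong causality may fail --- indeed the whole point of the paper is to allow chronology violation at and beyond the horizon, so ``strong causality precludes the accumulation'' cannot be applied to the compact set you actually have. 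The standard repair: the $\omega$-limit set $\Omega$ of the flow line is compact and invariant under the flow of $V$; since a future-directed timelike curve from a point $r\in H^+(S)$ immediately leaves $\overline{D^+(S)}$ (if $r'\in I^+(r)\cap\overline{D^+(S)}$ then $I^-(r')$ meets $D^+(S)$ at a point chronologically after $r$, so $r\in I^-(D^+(S))$, contradicting $r\in H^+(S)$), $\Omega$ cannot meet $H^+(S)$, and an achronality argument likewise keeps it off $I^-(S)$; hence $\Omega\subset\mathrm{int}\,D(S)$, which is strongly causal, and the inextendible $V$-integral curves contained in the compact invariant set $\Omega$ violate non-imprisonment there. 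With that patch your verification is complete and the corollary follows as you state.
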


\begin{proof}
Let $V$ be any smooth normalized future-directed timelike field, and
 let $S_1:=S$ and $S_2:=H^+(S)$. The assumption of Theorem \ref{kaq} are satisfied thus the desired conclusion follows. $\square$
\end{proof}

%
%

The usual vacuum conservation theorem states that if the stress-energy tensor vanishes on $S$ then it vanishes on $D^+(S)$ and on the horizon $H^+(S)$. Actually, the previous result establishes a converse implication  which is of particular interest, namely that if $T\ne 0$ somewhere on $S$ then $T\ne 0$ at some point  $p\in H^+(S)$. Suppose that $H^+(S)$ is $C^1$ at $p$ and that $n$ is a future directed lightlike vector orthogonal to $T_pH^+(S)$.  Observe that we cannot conclude that $T(n,n)\ne 0$ at $p$, for it can be that $T^\mu_{\ \nu} \,n^\nu\propto n^\mu$ as for some stress-energy tensor of pure aligned radiation (Type II \cite[Sect.\ 4.3]{hawking73}).

Let us introduce two modifications of the dominant energy condition.
\begin{definition}
The {\em stable dominant energy condition} \cite{hall82}. On every tangent space $T_pM$ the linear endomorphism  (\ref{llp}) sends the future-directed casual cone into the future-directed timelike cone. \\
The {\em weakened stable dominant energy condition}. At every point $T=0$ or the stable dominant energy condition holds, which is equivalent to: the linear endomorphism  (\ref{llp}) sends the future-directed casual cone into the future-directed timelike cone plus the zero vector.\footnote{That the former characterization implies the latter is clear. For the converse, if at the given point $T= 0$  or if no causal vector is sent to zero we have finished. Thus assume $T\ne 0$ and there is a f.d.\ causal vector $w$ which has zero image. Since $T\ne 0$
there is a f.d.\ causal vector $v$ such that  $u^b:=-T^b_{\ a} v^a\ne 0$, necessarily f.d.\ timelike by assumption, then $0=T(w,v)=-g(w,u)$ which is a contradiction since it must be positive.}
\end{definition}

The stable dominant energy condition  is preserved under small perturbations of the endomorphism, which means that the source content of spacetime is not on the verge of violating the dominant energy condition. Unfortunately, it might seem a too strong condition as it excludes the vacuum case, that is, under this condition $T\ne 0$ at every point. Observe that we might want to impose the stable dominant energy condition only where a source is really present. This leads us to the
weakened stable dominant energy condition which contemplates the vacuum case.

In the four dimensional spacetime case and in the classification of \cite[Sect.\ 4.3]{hawking73}, this condition allows only Type I stress-energy tensors  (namely the diagonal ones) where the energy density is larger than the absolute values of the principal pressures, $\vert p_i\vert<\rho$ or $p_i=\rho=0$ (observe that types III and IV are excluded by the null energy condition).


Let $p\in H^+(S)$ be a differentiability point of the horizon, and let $n$ be a future directed lightlike vector tangent to the horizon at $p$. If the weakened stable dominant energy condition holds and $T\ne 0$ at $p$ then $T(n,n)\ne 0$ at $p$, thus\footnote{Hawking \cite[p.\ 293]{hawking73} claims a similar result but his proof and claim seem incorrect. He assumes a weaker form of energy condition, which does not exclude the possibility $T^{ab}=\alpha n^a n^b$, where $n$ is a lightlike vector, and then he applies a version of the conservation theorem which he has not really proved. Probably he used Eq.\ (\ref{kos}) missing the difference between $\hat S$ and $S$, and that between $V$ and $n$.}
\begin{proposition}
Under the assumptions of Corollary \ref{odp}, and under the weakened stable dominant energy condition   $T\ne 0$ somewhere on $S$ implies $T(n,n)\ne 0$ somewhere on the horizon $H^+(S)$ (namely at some differentiability point of the horizon where a lightlike tangent $n$ to the horizon exists).
\end{proposition}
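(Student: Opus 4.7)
The plan is to leverage Corollary \ref{odp} to push the non-triviality of $T$ from $S$ onto the horizon, then to locate a differentiability point of $H^+(S)$ at which $T$ is still non-zero, and finally to use the strict content of the weakened stable dominant energy condition to turn $T\ne 0$ into $T(n,n)\ne 0$ there.

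First I apply the contrapositive of Corollary \ref{odp}: the dominant energy condition is weaker than its stable variant and is therefore satisfied, so the iff chain yields a point $p_0\in H^+(S)$ with $T(p_0)\ne 0$. Since $T$ is continuous on $M$, the set $W:=\{q\in M: T(q)\ne 0\}$ is open, so $W\cap H^+(S)$ is a non-empty relatively open subset of the horizon.

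Next I use that the future horizon $H^+(S)$ is a locally Lipschitz topological hypersurface (locally a Lipschitz graph over a spacelike transversal). By Rademacher's theorem such a graph is differentiable almost everywhere, so differentiability points are dense in $H^+(S)$. Hence I can select a differentiability point $p\in W\cap H^+(S)$, at which $T_p H^+(S)$ is a null hyperplane and therefore contains, up to scale, a unique future-directed lightlike vector $n$, namely the tangent to the null generator of the horizon through $p$.

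Finally I invoke the weakened stable dominant energy condition at $p$. Since $T(p)\ne 0$, the equivalent pointwise characterization in the footnote rules out that the endomorphism $v^\mu\mapsto -T^\mu_{\ \nu}v^\nu$ sends any future-directed causal vector to zero; hence $u^\mu:=-T^\mu_{\ \nu}n^\nu$ is future-directed timelike. Since $n$ is future-directed lightlike and non-zero, $g(u,n)<0$, and therefore $T(n,n)=-g(u,n)>0$, as required.

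The main obstacle I expect is the second step: it relies on the external regularity fact that future horizons are Lipschitz and hence differentiable on a dense subset, so that the open set $W\cap H^+(S)$ produced in step one is guaranteed to contain an honest differentiability point of $H^+(S)$ at which the lightlike tangent $n$ used in step three is actually defined.
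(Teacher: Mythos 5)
Your proof is correct and takes essentially the same route as the paper: Corollary \ref{odp} produces a point of the horizon where $T\ne 0$, continuity of $T$ together with almost-everywhere differentiability of the Lipschitz horizon yields a differentiability point with $T\ne 0$, and the weakened stable dominant energy condition there makes $u^\mu=-T^\mu_{\ \nu}n^\nu$ future-directed timelike so that $T(n,n)=-g(u,n)>0$. The only difference is presentational: the paper places this last step in the text immediately preceding the proposition rather than inside the proof, whereas you spell it out explicitly.
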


\begin{proof}
We know that $T\ne 0$ at some point $q\in H^+(S)$, thus in a neighborhood $U$ of $q$. As the horizon is almost everywhere differentiable and $T$ is continuous we can find $p \in U\cap H^+(S)$ where the horizon is differentiable and $T\ne 0$. $\square$
\end{proof}

This result will allow us to show that under the weakened stable dominant energy condition compact Cauchy horizons cannot form \cite{minguzzi14d}, a result which seems significative for a final resolution of the weak/strong cosmic censorship conjecture.

Physically speaking, in absence of quantum effects the weakened stable dominant energy condition seems reasonable in all those phenomena in which it is known that the source of spacetime is not just coherent radiation, particularly in the study of the development of future singularities. Less convincing is its validity at the beginning of the Universe since the Higgs field may still stay in its non-degenerate minimum so that, as all the particles would be massless, the only source content would  be  radiation. In this case if $S$ is a present time partial Cauchy hypersurface the radiation would emerge from the boundary $H^{-}(S)$ which would play the role of Big Bang null hypersurface. For more on this picture of the beginning of the Universe and its advantages the reader is referred to \cite{minguzzi09d,minguzzi10d}.

\section{Conclusions}
We have proved a version of the vacuum conservation theorem which has some good features: (a) it involves causal rather than just spacelike hypersurfaces, (b) it holds for Lipschitz hypersurfaces, (c) it does not assume the existence of a time function, (d) it includes a converse preservation claim according to which if the stress-energy tensor does not vanish on $S$ then it does not vanish on $H(S)$. The application of this theorem to the study of compact Cauchy horizons has been briefly commented.




\def\cprime{$'$}

\end{document}